\newtheorem{theorem}{Theorem}[section]
\theoremstyle{definition}\newtheorem{definition}{Definition}[section]
\newtheorem{corollary}{Corollary}[theorem]
\newcommand{\be}{\begin{equation}}
\newcommand{\ee}{\end{equation}}
\newcommand{\ba}{\begin{eqnarray}}
\newcommand{\ea}{\end{eqnarray}}
\title{{\sf Hamiltonian renormalisation: a Categorical Perspective}}
\author{
{\sf M. Rodriguez Zarate $^1$}\thanks{{\sf 
melissa.rodriguez@gravity.fau.de}}\\
\\
{$^1$\sf Inst. for Theor. Phys. III, FAU Erlangen -- N\"urnberg,}\\
{\sf Staudtstr. 7, 91058 Erlangen, Germany}\\
}
\date{{\small\sf \today}}
\begin{document}

\maketitle

{\sf

\begin{abstract}
We present a categorical formulation of the Hamiltonian renormalisation programme for quantum field theories, 
establishing a systematic bridge between functional and lattice renormalisation. 
To this end, we introduce two categories, \textbf{Seq} and \textbf{Func}, 
whose objects correspond to resolution spaces at different ultraviolet scales, 
and whose morphisms encode embeddings, projections, coarse-graining maps, and discrete derivatives.

Focusing on Dirichlet-type embeddings, we construct the corresponding subcategories $\textbf{Seq}_D$, $\textbf{Func}_D$ and prove that the embedding and its adjoint define functors between them. 

Furthermore we revisit and extend the analysis of the convergence rate to the fixed point for the couplings of the $U(1)^3$ model for $3+1$ Euclidean quantum gravity, analysing different combinations of Haar and Dirichlet embeddings.
\end{abstract}

\section{Introduction}

In quantum field theory (QFT), quantum fields are operator-valued distributions. When interactions are included, products of such distributions become ill-defined, leading to the appearance of the so called ultraviolet (UV) and infrared (IR) divergences. Renormalisation originates in QFT as a tool to control these singularities \cite{Weinbergrenormalisation}. In the constructive quantum field theory approach (CQFT) \cite{Haagbook,GlimmJaffebook} one tames the singularities by imposing temporal UV and IR cutoffs, discarding contributions from very high- and low-momenta. The dependence on these cutoffs is parametrised by the coupling constants of the theory. Then the regulators are removed by taking their appropiate limits: \textit{thermodynamic}
for the IR, \textit{continuum} for the UV. One expects that only a finite set of couplings —--known as relevant couplings--— survive after taking these limits. If this is not the case, the theory loses predictability and becomes ill-defined.

Renormalisation in QFT is usually treated perturbatively; gravity, however, is perturbatively non-renormalisable \cite{sagnotti,uvtwoloops}. Approaches to quantum gravity therefore address ultraviolet divergences using various background-independent renormalisation methods, most of which are formulated within the path-integral framework \cite{Dittrichsone,Steinhaus,Bahrone,Livine,Zapata,astrid}. In the case of canonical loop quantum gravity (LQG) \cite{Thiemannbook,Rovellibook,gambinipullin,Ashtekarbook} the theory is ultraviolet finite, but suffers from numerous quantisation ambiguities \cite{qsdiii,Perez}. The Hamiltonian renormalisation (HR) programme was introduced to precisely address these ambiguities \cite{renormalizationI,thiemannreview}. HR is inspired by the \textit{renormalisation group} \cite{Kadanoff,Wilsonren,Fisher,Wegner,Glazek}. Although originally developed with loop quantum gravity in mind, it is a theory--agnostic framework that can be applied to any canonical QFT, in both Euclidean and Lorentzian spacetimes. Moreover, HR is intrinsically non--perturbative and background--independent. The programme has been applied to free field theories \cite{renormalizationII,renormalizationIII,renormalizationIV,renormalizationV,renormalizationVI,renormalisationVII}, and recently, it has been extended to interacting QFTs \cite{renormalisationVIII,RenormalisationIX}.

From a mathematical standpoint, Hamiltonian renormalisation draws inspiration from constructive quantum field theory. In CQFT the basic objects are bounded quantum operators whose linear combinations and products are also bounded, forming an \textit{algebra of observables} $\mathcal{A}$. The physical content of the theory is captured by expectation values, represented mathematically by normalised positive linear functionals $\omega$ on $\mathcal{A}$. Each such functional corresponds uniquely to a Gelfand–Naimark–Segal (GNS) cyclic representation, which produces a Hilbert space $\mathcal{H}_\omega$ and a vacuum state $\Omega_\omega$ annihilated by the Hamiltonian $\hat{H}_\omega$ \cite{Haagbook}. The task is to renormalise either $\omega$ or its GNS data ($\mathcal{H}_\omega,\hat{H}_\omega, \Omega_\omega$).  
Starting from the GNS data at a UV resolution $M'$, one constructs an effective description at a coarser scale $M<M'$, by systematically removing the degrees of freedom between $M$ and $M'$. Iterating this procedure generates a flow of effective theories ---parametrised by the couplings--- whose fixed point is to be related to the continuum theory.

In the series of papers \cite{renormalizationII,renormalizationIII,renormalizationIV,renormalizationV,renormalizationVI,renormalisationVII}, 
the authors applied Hamiltonian renormalisation to free bosonic and fermionic field theories 
as a consistency check, 
demonstrating that the programme correctly reproduces the known exact continuum results. In \cite{renormalisationVIII,RenormalisationIX} the authors renormalised for the first time interacting QFTs; $P(\phi)_2$ polynomial self-interacting QFTs in two spacetime dimensions \cite{Simonone} and the $U(1)^3$ model for $3+1$ quantum gravity \cite{Smolinuone}. As in any renormalisation scheme, to construct the GNS data at UV resolution $M$, one has to make several choices since different discretisations, projections, representations, factor orderings etc. may lead to different continuum theories or affect 
the speed of convergence to the fixed point. 
In \cite{renormalisationVIII,RenormalisationIX} for instance, the authors compared three such choices: (i) \textit{Functional vs discrete renormalisation}, where fields are either projected onto Hilbert subspaces or discretised on a lattice;
(ii) \textit{Locality vs quasi-locality}; to project or discretise the fields Dirichlet (smooth, quasi-local) and Haar (discontinuous, local) kernels were chosen \cite{haar,dirichlet,Thiemannkernels}. (iii) \textit{Fock vs Narnhofer-Thirring (NT) representations} \cite{NTreps}.

Regarding points (i) and (ii), the corresponding choices are encoded within the HR framework through the definition of several linear maps ---such as embeddings, projections, inclusions, and discrete derivatives--- 
which specify the GNS data at different UV resolutions.
In this contribution, we focus precisely on these aspects and aim to address, among others, the following questions: 
Which discretisations and projections are most suitable? 
Can one consistently combine Haar and Dirichlet kernels? 
How does the rate of convergence towards the fixed point depend on the chosen kernel? 
What is the precise relationship between the projector-based and discretised formulations, and are they equivalent? 

Although these questions have already been raised in the HR series, here we will address them from a categorical point of view. As it will be shown, category theory provides a natural language to construct the UV resolution spaces and their associated maps, endowing HR with an elegant, axiomatic framework for analysing the various discretisation and projection choices. Beyond formal clarity, this perspective also enables the identification and classification of key structural properties of the discretisation and coarse--graining procedures, 
clarifying and providing a conceptual toolkit to systematically understand how different choices affect the corresponding renormalisation flow.

\vspace{3mm}

This work is organised as follows:

\vspace{3mm}

In Section 2, we introduce the Hamiltonian renormalisation programme, following closely \cite{renormalizationI,renormalisationVIII,renormalisationVII,Thiemannkernels}. 
In particular, we present both the functional and discretised formulations of the renormalisation procedure. 

Section 3 reformulates the construction of the UV resolution spaces in categorical terms, 
for both the projected (functional) and discretised cases. Furthermore we illustrate the aforementioned choices with the example of the $U(1)^3$ model for $3+1$ quantum gravity, where we both revisit and extend the analysis performed in \cite{RenormalisationIX}. 

In Section 4, we relate the functional and lattice renormalisation using functors. 

Finally in Section 5 we summarise and conclude.

\section{Hamiltonian renormalisation}
\label{hamiltonianrenormalisation}

More details on this section, in particular the relation to wavelet theory \cite{wavelet,waveletstwo} can be found in \cite{Thiemannkernels}.
We work on spacetimes diffeomorphic to $\mathbb{R}\times\sigma$. In a first step the 
spatial D-manifold $\sigma$ is compactified to a D-torus $T^D$, this effectively imposes periodic boundary conditions and corresponds to an IR cutoff which is fixed once and for all. Due to the compactification, the constructions 
that follow have to be done direction wise for each copy of $S^1$. On $X:=S^1$, understood 
as $[0,1)$ with endpoints identified, we consider 
the Hilbert space $L=L_2([0,1),\;dx)$ with elements from now on defined by capitalised letters $F ,G \in L$. Due to periodicity the elements of $L$ can be spanned by Fourier modes
\begin{align}
\label{a.1}
e_n(x):= e^{2\pi\;i\;n\;x},\; n\in\mathbb{Z},
\end{align}
which are orthonormal with respect to the inner product 
\begin{align}
   \label{a.2}
<F,G>_L:=\int_0^1\;dx\; \overline{F(x)}\; G(x). 
\end{align} 
The UV cutoffs must be labelled, with each label indicating the energy scale or resolution of the corresponding effective theory. To this end, we define the set $\mathbb{O} \subset \mathbb{N}$ of positive odd integers, which we use to index the UV resolutions. In order to relate lower and higher energy modes, we equip $\mathbb{O}$ with a partial order;
\begin{align}
\label{a.3}
M<M' \;\;\Leftrightarrow\;\; \frac{M'}{M}\in \mathbb{N},
\end{align}
The partial order is directed, that is, for each $M,M'\in \mathbb{O}$ we find 
$M^{\prime\prime}\in \mathbb{O}$ such that $M,M'<M^{\prime\prime}$ e.g. $M^{\prime\prime}=M M'$. We call the set $\mathbb{O}$, together with the partial order defined above, a partially ordered set (poset), denoted by $P_{\mathbb{O}}$. 
Each resolution $M\in P_{\mathbb{O}}$ restricts the real and momentum spaces to subspaces of finite resolution $M$. In the context of lattice renormalisation, this restriction would correspond to introducing a finite lattice in configuration space and its dual.
The subspaces are defined by
\begin{align}
\label{resolutionspaces}
\mathbb{N}_M=\{0,1,..,M-1\},\; 
\mathbb{Z}_M=\{-\frac{M-1}{2}, -\frac{M-1}{2}+1,..,\frac{M-1}{2}\},\;
X_M=\{x^M_m:=\frac{m}{M},\; m\in \mathbb{N}_M\},    \end{align}
Notice that $X_M$ satisfy $X_M\subset X_{M'} \;\;\Leftrightarrow\;\; M<M'$.
\subsection{Resolution spaces}
\label{resspaces}

Given the sets (\ref{resolutionspaces}), 
one may project down the elements in $L$ to the UV resolution spaces \(L_M \subset L\) defined by 
\be \label{basis}
L_M:={\sf span}(\{e_n,\; n\in \mathbb{Z}_M\}).
\ee
Its elements will be denoted by capital letters with an $M$-subscript, e.g.\ $F_M, G_M \in L_M$. In real space they correspond to compact, continuous functions, while in momentum space they are discretised with cutoff frequencies in $\mathbb{Z}_M$. 
On $L_M$ we use the same inner product as on $L$, hence the $e_n,\; n\in \mathbb{Z}_M$ 
provide an ONB for $L_M$. An alternative basis for $L_M$ can be constructed by means of suitable integral kernels. In the HR series, the authors have studied Haar and Dirichlet kernels. The motivation to introduce these bases is that, in contrast to the plane waves 
$e_n$, they have better locality properties; Dirichlet wavelets are spatially concentrated \textbf{around} $x=x^M_m$ while Haar bases are concentrated \textbf{at} $x=x^M_m$, furthermore both are real valued. In particular, Dirichlet bases are also smooth. This is a crucial feature because quantum field theory involves 
products of derivatives of the fields and derivatives of characteristic functions yield $\delta$ distributions. Further information regarding the two alternative bases will be discussed thoroughly sections 3 and 4.

Instead of the resolution spaces $L_M$ one may consider the space $l_M$ of square summable sequences $f_M=(f_{M,m})_{m\in \mathbb{N}_M}$
with $M\in P_\mathbb{O}$ and 
inner product 
\be \label{a.9}
<f_M,\;g_M>_{l_M}:=\frac{1}{M}\sum_{m\in\mathbb{N}_M}\; \overline{f_{M}(m)}\; g_{M}(m), 
\ee
where $f_{M,m}:=f_M(m)$. If we interpret $f_{M}(m)$ as $=F(\frac{m}{M})$ then (\ref{a.9}) can be viewed as the lattice approximant of 
$<F,G>_L$. Intuitively, working with the $L_M$ subspaces can be understood as projecting the elements of $L$ into the resolution subspaces $L_M$. On the other hand the sequences $f_M, g_M \in l_M$ correspond to lattice discretisations of $F_M, G_M \in L_M$ via the embeddings
\be \label{embedding}
I_M:\; l_M\to L_M;\;\;\; (I_M\cdot f_M)(x):=
<\chi^M_{\;\cdot}(x), f_M>_{l_M}=\frac{1}{M}\;\sum_{m\in \mathbb{N}_M}\; f_{M}(m)\; \chi^M_m(x).
\ee
Where the basis elements $\chi^M_m$ are (for the moment) left unspecified. The adjoint is defined by the requirement that 
\be \label{a.11}
<I_M^\ast\cdot F_M,\; g_M>_{l_M}= <F_M,\; I_M\cdot g_M>_{L_M},
\ee
and thus is given by 
\be \label{adjoint}
I_M^\ast:\; L_M\to l_M;\; (I_M^\ast\cdot F_M)(m)=<\chi^M_m, F_M>_{L_M}
\ee
Using (\ref{a.9}) one can find the identity map in $l_M$ and prove that $I_M$ is an isommetry
\be \label{a.13}
I_M^\ast \cdot I_M=1_{l_M}, \;\;<I_M .,I_M .>_{L_M}=<.,.>_{l_M},
\ee
which  also shows that $L_M, l_M$ are in 1-1 correspondence.
Likewise 
\be \label{a.13a}
P_M:=I_M \cdot I_M^\ast=1_{L_M}
\ee

\subsection{Functional renormalisation}
In this subsection we discuss how to implement 
the Hamiltonian renormalisation programme in the continuum, without resorting to an explicit lattice discretisation.  This is achieved by projecting the smeared fields into the resolution spaces $L_M\subset L$. In order to do so, we consider $I_M$ also as a map $I_M:\; l_M\to L$ with image $L_M\subset L$. The map $I_M^\ast: L\to l_M$ has the same formula of the previous subsection with $F\in L$, furthermore  $P_M: L\to L_M$ can now be understood as an orthogonal 
projection 
\be \label{a.14}
P_M\cdot P_M=P_M,\; P_M^\ast=P_M.
\ee
One can then project the functions $F \in L$ into $L_M$ via
\be \label{a.15}
F_M(x):=(P_M\cdot F)(x)=\int_0^1\; dy\; P_M(x,y)\; F(y).
\ee

Given a functional $H[\Pi,\Phi]$ of the continuum fields we define its projection at resolution M 
by 
\be \label{a.16b}
H_M[\Pi_M,\Phi_M]=H[\Pi_M,\Phi_M].
\ee
To project continuous phase space functionals (such as the Hamiltonian) into the resolution spaces $L_M$ one projects the canonical conjugate pairs, that is
\begin{align}
    H_M[\Pi_M,\Phi_M]:=H[\Pi_M,\Phi_M].
\end{align}
This projections have to be carried to the quantisation. In order to do so, we briefly remind the reader the quantisation from a CQFT perspective. The (Weyl) algebra of observables $\mathfrak{A}$ correspond to bounded operators in L generated by the Weyl elements 
\be \label{a.18}
W[F]=e^{-i<F,\Phi>_L},\; W[G]=e^{-i<G,\Pi>_L}
\ee 
for real valued $F,G\in L$. The representation and commutation relations are encoded in the Weyl relations
\ba \label{a.19}
&& W[G]\; W[F]\; W[-G]=e^{-i<G,F>_L}\;W[F],\; 
W[F]\;W[F']=W[F+F'], \; W[G]\;W[G']=W[G+G']
\nonumber\\
&& W[0]=1_{\mathfrak{A}},\;
W[F]^\ast=W[-F],\; W[G]^\ast=W[-G] 
\ea
Via the GNS construction one can identify a one--to--one correspondence between the states (positive, normalised linear functionals) $\omega$ on $\mathfrak{A}$ and a GNS data given by a Hilbert space $\mathcal{H}_\omega$ and a cyclic representation $\rho, \Omega$ \cite{Haagbook,GN,S}. 
The correspondence is given by 
\be \label{a.20}
\omega(A)=<\Omega, \; \rho(A)\Omega>_{{\cal H}}. 
\ee
For each $M \in P_\mathbb{O}$ the Weyl algebra as well as their relations are replaced by their projected versions
\begin{gather} \label{a.21}
W_M[F_M]=e^{-i<F_M,\Phi_M>_{L_M}},\;\;
W_M[G_M]=e^{-i<G_M,\Pi_M>_{L_M}}, \\
W_M[G_M]\; W_M[F_M]\; W_M[-G_M]=e^{-i<G_M,F_M>_{L_M}}\;W_M[F_M],\; 
W_M[F_M]\; W_M[F'_M]=W_M[F_M+F'_M],\; \nonumber
\\
W_M[G_M]\; W_M[G'_M]=W_M[G_M+G'_M],\;
W_M[0]=1_{\mathfrak{A}_M},\;W_M[F_M]^\ast=W_M[-F_M],\; W_M[G_M]^\ast=W_M[-G_M]
\end{gather}
The GNS construction can be carried out at each resolution $M$, thus a state $\omega_M$ on $\mathfrak{A}_M$ is identified with the GNS data $(\rho_M, {\cal H}_M, \Omega_M)$ where the representation is understood to be densely defined on the subspace ${\cal D}_M=
\mathfrak{A}_M \Omega_M$. 
Due to the poset structure the algebras are nested, for instance, given $M<M'\in P_{\mathbb{O}}$ then $\mathfrak{A}_M\subset \mathfrak{A}_M'\subset \mathfrak{A}$. This nesting property follows directly from the projectors and implies the  identities
\be \label{a.23}
W_{M'}[F_M]=W_M[F_M],\;\;W_M[F_M]=W[F_M]
\ee
To initialise the renormalisation programme one needs to provide 
an Ansatz of a family of projected theories
$(\omega^{(0)}_M,\rho^{(0)}_M(H_M,c^{(0)}_M))_{M\in \mathbb{O}}$ at some effective resolution $M$.
Theories corresponding to different resolutions are connected through a renormalisation flow equation, which recursively generates a sequence $(\omega^{(r)}_M,\rho^{(r)}_M(H_M,c^{(r)}_M))_{M\in \mathbb{O}}$ here labelled by $(r)$. The elements of the sequence are plugged into the renormalisation flow equation iteratively. As \(r\) increases, the sequence may converge to an accumulation point
\((\omega^{(*)}_M,\, \rho^{(*)}_M(H_M, c^{(*)}_M))_{M \in \mathbb{O}}\),
known as the \emph{fixed point}, where the theory becomes \emph{scale invariant}. In the HR programme, both the algebraic state $\omega_M$ and the matrix elements of the Hamiltonian operator $H_M$ (or quadratic form) are renormalised through the renormalisation flow equations
\ba\label{a.26}
&& \omega^{(r+1)}_M(A_M):=\omega^{(n)}_{M'(M)}(A_M),\;\; 
<\rho_M^{(r+1)}(A_M)\Omega^{(r+1)}_M,\;\rho^{(r+1)}_M(H_M)\; \rho^{(r+1)}_M(A'_M)\Omega^{(r+1)}_M>_{{\cal H}^{(r+1)}_M}
\nonumber\\
&=& <\rho^{(r)}_{M'}(A_M)\Omega^{(r)}_{M'},\;\rho^{(r)}_{M'}(H_{M'})\; \rho^{(r)}_{M'}(A'_M)\Omega^{(r)}_{M'}>_{{\cal H}^{(r)}_{M'}}
\ea
where $M':\mathbb{O}\to \mathbb{O}$ is a fixed map with the property that 
$M'(M)>M,\;M'(M)\not=M$. 

\subsection{Lattice renormalisation}

In the discrete formulation, the renormalisation procedure is implemented on a spatial lattice, 
where the continuum fields are replaced by finitely many degrees of freedom associated with the lattice sites and thus is closer in spirit to condensed matter renormalisation.

The formulation follows naturally by exploiting the relation between the continuum $L_M$ spaces and the discrete $l_M$ ones. This is achieved through the embedding map $I_M$ and its adjoint $\overline{I}_M$. The Weyl elements are then related via the identities 
$w_M[f_M]=W_M[F_M],\; f_M=I_M^\ast\cdot F_M$. The $w_M[f_M], \; w_{M'}[f_M']$ at resolution $M, \; M'$ respectively can be related via the \textit{coarse graining} map $i_{M M'}:= \overline{I}_MI_{M'};\; l_M\to l_{M'}$ such that
$w_{M'}[I_{M M'}\cdot f_M]=w_M[f_M]$. This map obeys $i_{M_2 M_3}\cdot i_{M_1 M_2}=i_{M_1 M_3}$ for 
$M_1<M_2<M_3$ because the image of $I_M$ is $L_M$ which is a subspace of $L_{M'}$ thus
$i_{M_2 M_3}\cdot i_{M_1 M_2}=I_{M_3}^\ast\cdot P_{M_2}\cdot  I_{M_1}=I_{M_3}^\ast\cdot I_{M_1}$. 
Then  $W_{M'}[F_M]=w_{M'}[I_{M'}^\ast F_M]=w_{M'}[I_{M'}^\ast\cdot I_M\cdot f_M]
=w_{M'}[i_{M M'}\cdot f_M]$ indeed. For the same reason $W_{M'}[F_M]=W_M[F_M]$ as 
$L_M$ is embedded in $L_{M'}$ by the identity map. The renormalization flow in terms of Weyl elements $w_M[f_M]$ and the coarse graining map $I_{M,M'}$ takes the form
\begin{align}
\label{rflow}
\omega_M^{(r+1)}(w_M[f_M])&:=\omega^{(r)}_{M'}(w_{M'}[i_{M,M'}f_M']) \nonumber \\ 
\braket{w_M[f'_M]\Omega_{M}^{(r+1)},H^{(r+1)}_{M}\;w_M[f_M]\Omega_{M}^{(r+1)}}_{\mathcal{H}^{(r+1)}_M}&:=\braket{w_{M'}[r_{M,M'}f'_M]\Omega_{M'}^{(r)},H^{(r)}_{M'}\;w_{M'}[I_{M,M'}f_M]\Omega_{M'}^{(r)}}_{\mathcal{H}^{(r)}_{M'}}.
\end{align}
\section{Categorical framework for resolution spaces}
In this section, we reformulate the structure of the UV resolution spaces using the language of category theory, we do so by introducing two closely related categories: the \textbf{Seq} category, whose objects are the spaces of square integrable sequences $l_M$ and the \textbf{Func} category, whose objects are the $L_M$ spaces. Throughout this section the embeddings $I_M$ are defined as in \cref{resspaces} and follow the properties depicted therein. We start by defining the Dirichlet and Haar embeddings.

\begin{definition}[Dirichlet embedding]
\label{demb}
Let $P_\mathbb{O}$ be the poset of resolution scales defined in \cref{resspaces} and $l_M$ be the space of square summable sequences at resolution $M\in P_{\mathbb{O}}$. The Dirichlet embedding $I_M^D:l_M\rightarrow L_M^D$ follows \cref{embedding} where \(\chi^{D, M}_{m}(x)\) corresponds to the Dirichlet kernel 
\begin{align}
\chi^{D,M}_{m}(x):=\sum_{n\in \mathbb{Z}_M}\; e_n(x-\frac{m}{M})=\frac{sin[M \pi (x-\frac{m}{M})]}{sin [\pi (x-\frac{m}{M})]},
\end{align}
and $L_M^D\subset L$ is the subspace of $L$ spanned by the Dirichlet basis $\{\chi^{D,M}_{m}\}_{m\in \mathbb{N}_M}$.
\end{definition}
The adjoint $\overline{I}_M^D:L^D_M\rightarrow l_M$ follows from \cref{adjoint}. The functions $\chi^{D,M}_{m}$ are orthogonal but not orthonormal \(
<\chi^M_{D,m},\;\chi^M_{D,m'}>_{L_M^D}=M\; \delta_{m,m'}\).

\begin{definition}[Haar embedding]
\label{hemb}
Let $P_\mathbb{O}$ be the poset of resolution scales defined in \cref{resspaces} and $l_M$ be the space of square summable sequences at resolution $M\in P_{\mathbb{O}}$. The Haar embedding $I_M^H:l_M\rightarrow L_M^H$ follows \cref{embedding} where \(\chi^{H,M}_{m}\) corresponds to the Haar kernel 
\begin{align}
\chi^{M}_{H,m}(x):=M\chi_{[\frac{m}{M},\frac{m+1}{M})}(x),
\end{align}
where $\chi$ is the indicator function and $L_M^H\subset L$ is the subspace of $L$ spanned by the Haar basis $\{\chi^{H,M}_{m}\}_{m\in \mathbb{N}_M}$.
\end{definition}
Once more, $\overline{I}_M^H:L^H_M\rightarrow l_M$ follows from \cref{adjoint}. 
The functions $\chi^{H,M}_{m}$ form an orthogonal basis for $L_M^H$ with inner product \(
<\chi^M_{H,m},\;\chi^M_{H,m'}>_{L_M^H}=M\; \delta_{m,m'}\). 
Both $L^D_M, \; L^H_M$ are subspaces of $L$.
However, in general, $L^D_M \neq L^H_M$. This is due to the fact that $L^D_M$ is the span of Dirichlet kernels (bandlimited trigonometric polynomials of degree $2\leq M/2$) while $L^H_M$ is the span of Haar boxes (piecewise constant on the
$M$ intervals). 
Since both $L^D_M, \; L^H_M \subset L$, we may relate them via a change of basis transformation. 
For instance, 
we can write any element of one basis as a linear combination of the other
\begin{align}
    \chi^{D,M}_{m}(x)&=\sum_{m' \in \mathbb{N}_M} A_{m',m}\;\chi^{H,M}_{m'}(x), \;\;\; A_{m',m}:=\frac{1}{M}\braket{\chi^{H,M}_{m'},\chi^{D,M}_{m}}_L, \label{dtoh}  \\
    \chi^{H,M}_{m'}(x)&=\sum_{m \in \mathbb{N}_M} B_{m,m'}\;\chi^{D,M}_{m}(x),\;\;\;B_{m,m'}:=\frac{1}{M}\braket{\chi^{D,M}_{m},\chi^{H,M}_{m'}}_L. \label{htod}
\end{align}
The bases are normalised such that the transformation is unitary;
since the elements in both bases are real-valued, the inner products in $A_{m',m}$ and $B_{m,m'}$ are equal. 
We compute
\begin{align}
\label{changeofbasis}
\braket{\chi^{H,M}_{m},\chi^{D,M}_{m'}}_L &=\sum_{n\in \mathbb{Z}_M}\left(\int_L \chi_{[\frac{m}{M},\frac{m+1}{M})}(x)\;e^{2\pi i nx}dx\right)e^{-2\pi i nm'/M}=\sum_{n\in \mathbb{Z}_M}\left(\int_\frac{m}{M}^{\frac{m+1}{M}} e^{2\pi i nx}dx\right)e^{-2\pi i nm'/M}.
\end{align}
For $n=0$, the integral in \cref{changeofbasis} gives $1/M$. For $n\neq 0$ we get
\begin{align}
    \int_\frac{m}{M}^{\frac{m+1}{M}} e^{2\pi i nx}dx= \frac{e^{2\pi i(m+1)/M}-e^{2\pi inm/M}}{2\pi in}=e^{2\pi inm/M}\frac{(e^{2\pi in/M}-1)}{2\pi in}=\frac{e^{\pi in/M}}{\pi n}sin(\pi n/M).
\end{align}
Raplacing in \cref{changeofbasis}, the inner product takes the form
\begin{align}
\braket{\chi^{H,M}_{m},\chi^{D,M}_{m'}}_L =\frac{1}{M}+2\sum_{n=1}^{\frac{M-1}{2}}cos[\frac{2\pi n}{M}(m-m'+1/2)]\frac{sin(\pi n/M)}{\pi n} 
\end{align}
Due to the linearity of the Hilbert space, \cref{dtoh,htod} can be viewed as linear isomorphisms:
\begin{definition}[Change of basis]
Let $M \in P_\mathbb{O}$, furthermore, let $L^D_M$ and $L^H_M$ denote the subspaces spanned by the Dirichlet and Haar bases, respectively. 
The change of basis from Dirichlet to Haar is given by the linear isomorphism $I_{M}^{D H}:L^D_M\rightarrow L^H_M$, whose action is defined in \cref{dtoh}. 
Conversely, 
the change of basis from Haar to Dirichlet is the linear isomorphism  $I_{M}^{H D}:L^H_M\rightarrow L^D_M$ whose action is given by \cref{htod}.
\end{definition}
Derivative operators are also discretised on the lattice. 
Here we follow two possible discretisations, related to the Haar and Dirichlet kernels:
\begin{definition}[Dirichlet derivative]
\label{partiald}
Let 
\begin{align}
    \partial^D: L^D_M \to L^D_M\;\;;\;\;\partial:=\partial\big|_{L^D_M},
\end{align}
denote the usual derivative operator on $L$ restricted to $L_M^D\subset L$.
The \emph{Dirichlet (discrete) derivative} is the operator 
\begin{align}
\partial_{M}^D := \overline{I}_{M}^D \cdot\partial^D\cdot I_{M}^D : \; l_M \longrightarrow l_M; \;\;\; f_M(m)\mapsto \sum_{m' \in \mathbb{Z}_M} \sum_{n \in \hat{\mathbb{Z}}_M}
(2\pi i n)\, e^M_{n}(m-m')\, f_M(m'). 
\end{align}
\end{definition}
The Haar derivative is more subtle: since the pointwise derivative of Haar functions is not in $L^2$, we define the Haar derivative in a \emph{distributional sense} by 
\begin{definition}[Haar discrete derivative]
\label{partialh}
Let $L_M^H \subset L$ be the finite-dimensional subspace spanned by the Haar basis $\chi_m^{H,M}$. Furthermore, let $\partial^H : L_M^H \longrightarrow \mathcal{D}'(L)$ be the distributional derivative whose action on the Haar basis is
\begin{align}
\partial \chi_m^{H,M} = \delta_{(m+1)/M} - \delta_{m/M}. 
\end{align}
The \emph{Haar (discrete) derivative} is defined as the map 
\begin{align}
\partial_M^H :=\overline{I}^H_M\cdot \partial^H \cdot I^H_M: l_M \longrightarrow l_M\;;\;\;
f_M(m) \;\mapsto\; \frac{1}{M}[f_M(m+1) - f_M(m)] 
\end{align}
\end{definition}
The distinction between the definitions arises because on the one hand, Dirichlet wavelets are smooth and thus allow well-defined derivatives at every resolution, 
so that the corresponding derivative operator $\partial^D$ converges to the continuum derivative as $M \to \infty$. 
In contrast, Haar functions are piecewise constant, so their derivatives $\partial^H$ are distributions supported only at discontinuities. 
As a result, $\partial^H$ does not admit a  pointwise continuum limit, ---rather a distributional one--- 
and any attempt to construct the continuum derivative from Haar modes necessarily yields to Dirac deltas. 
This is why, in the continuum formulation of Hamiltonian renormalisation, one works with $\partial^D$ rather than $\partial^H$.

In order to relate sequences at different resolutions one needs to construct the coarse graining maps. In \cite{renormalisationVIII,RenormalisationIX} the authors used only Dirichlet coarse grainings maps, while in earlier works \cite{renormalizationII,renormalizationIII}, Haar coarse grainings were used. In this section we consider only the Dirichlet ones. The generalisation follows naturally.
\begin{definition}[Coarse graining]
Let $M\in P_\mathbb{O}$ and $l_M, l_{M'}$ be the spaces of square summable sequences at resolution $M$ and $M'$ with $M<M'$ respectively. The \emph{Dirichlet coarse graining} map is defined by
\begin{align}
\label{coarsegraining}
I_{MM'}:=\overline{I}^D_{M'}\cdot I^D_M:l_M\rightarrow l_{M'};\;\;\; f_M(m)\mapsto \sum_{m' \in \mathbb{N}_M}   \sum_{n\in \mathbb{Z}_{M}} e_n^M(m/3-m')f_M(m').
\end{align}
\end{definition}
Now we come to one of the main results of this contribution. The resolution spaces $\{L^X_M\}_{M\in P_{\mathbb{O}}}\subset L$ together with the maps defined above can be wrapped together into a category, the formal statement is give by the following theorem:
\begin{theorem}
\label{Funccat}
Let \(P_{\mathbb O}\) be the poset of resolution scales defined in \cref{a.3}.
For each \(M\in P_{\mathbb O}\), let \(L_M^X\subset L:=L_2([0,1),dx)\) denote the finite-dimensional subspace spanned by the basis \(\{\chi^{X,M}_m\}_{m\in \mathbb{N}_M}\)
with \(X\in\{D,H\}\) (Dirichlet/Haar) and inner product \eqref{a.9}. For every $M< M'$ let  
\begin{align}
I_{M,M'}^X:L^X_M\hookrightarrow L^X_{M'}\;; \; I_{M,M'}:=P_{M'}P_M\big|_{L^X_M}
\end{align}
 be a canonical inclusion map where $P_M,\;P_{M'}$ are the orthogonal projections defined in \cref{a.14}. 
Let moreover $I_M^{DH}:L_M^D\to L_M^H$, $I_M^{HD}:L_M^H\to L_M^D$ be the change of basis isomorphisms defined in \cref{changeofbasis}, and $\partial$ be the usual derivative operator on $L_2$ spaces.
Then the collection whose objects are $\{L_M^X\}_{M,X}$ for $M\in P_{\mathbb{O}}$, $X\in \{D,H\}$ and whose morphisms are the maps
\begin{align}
  \label{arrowsfunc}
    \big\{\mathrm{Id}_{L_M^X}:=\mathrm{Id}_L\big|_{L^X_M},\; I^X_{M,M'},\; I_M^{DH},\; I_M^{HD},\; \partial^D\big\}
\end{align}
is a category of functions \(\mathbf{Func}\), where the composition is defined as the usual composition of linear maps. 
\end{theorem}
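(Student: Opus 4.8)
The plan is to realise \(\mathbf{Func}\) as a subcategory of the category \(\mathbf{FdHilb}\) of finite-dimensional Hilbert spaces and linear maps. Every object \(L_M^X\) is, by construction, a finite-dimensional subspace of \(L=L_2([0,1),dx)\), and every generator listed in \eqref{arrowsfunc} is a linear map between two such subspaces; hence the entire collection sits inside \(\mathbf{FdHilb}\). Once this ambient embedding is granted, associativity of composition and the left/right unit laws are inherited verbatim from composition of linear maps, so these axioms require no separate proof. The theorem therefore reduces to two statements: that each object carries its identity morphism, and that the class of morphisms is closed under composition, i.e.\ that \(\mathbf{Func}\) is genuinely a subcategory. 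The identities \(\mathrm{Id}_{L_M^X}\) are present by definition, so the content of the proof is closure.

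Before closure I would check that every generator is well typed and lands where claimed; the only point needing care is the inclusion \(I_{M,M'}^X=P_{M'}P_M\big|_{L_M^X}\), for which one must verify the nesting \(L_M^X\subseteq L_{M'}^X\) whenever \(M<M'\). For the Dirichlet family this follows from \(\mathbb{Z}_M\subseteq\mathbb{Z}_{M'}\), so that \(L_M^D=\operatorname{span}\{e_n:n\in\mathbb{Z}_M\}\subseteq L_{M'}^D\). For the Haar family it follows from the divisibility \(M\mid M'\) built into the poset order \eqref{a.3}: each coarse box is a union of fine boxes, so a function that is piecewise constant on the \(M\) intervals is a fortiori piecewise constant on the \(M'\) intervals. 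Granting the nesting, both \(P_M\) and \(P_{M'}\) (the orthogonal projections onto \(L_M^X\) and \(L_{M'}^X\) respectively) act as the identity on \(L_M^X\), so that \(I_{M,M'}^X\) is precisely the subspace inclusion, as the notation \(\hookrightarrow\) already suggests.

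I would then verify closure by inspecting the composable pairs of generators. The two relations carrying genuine content are: (a) the semigroup law \(I_{M',M''}^X\circ I_{M,M'}^X=I_{M,M''}^X\) for \(M<M'<M''\), which is immediate once the maps are recognised as inclusions of nested subspaces and which mirrors the coarse-graining composition law recorded in \cref{hamiltonianrenormalisation}; and (b) the mutual invertibility \(I_M^{HD}\circ I_M^{DH}=\mathrm{Id}_{L_M^D}\), \(I_M^{DH}\circ I_M^{HD}=\mathrm{Id}_{L_M^H}\), which is exactly the statement that the change-of-basis maps of \eqref{dtoh}--\eqref{htod} are unitary and mutually inverse, as already noted after \eqref{htod}. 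Every other composite --- the mixed arrows such as \(I_{M,M'}^H\circ I_M^{DH}:L_M^D\to L_{M'}^H\), and the iterates \((\partial^D)^k:L_M^D\to L_M^D\) --- is again a linear map between objects of \(\mathbf{Func}\), hence belongs to the morphism class generated by \eqref{arrowsfunc} under composition.

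The main obstacle is conceptual rather than computational: one must decide whether \eqref{arrowsfunc} enumerates the morphisms outright or only a generating family. Read literally the displayed set is not closed, since \((\partial^D)^2\) and the mixed inclusion/change-of-basis composites are not among the five listed types; the clean reading, which I would adopt, is that \(\mathbf{Func}\) is the subcategory of \(\mathbf{FdHilb}\) \emph{generated} by the arrows in \eqref{arrowsfunc}. With that reading the two structural inputs (a) and (b) are the only facts beyond formal bookkeeping, and both are already available --- (a) from the nesting established above and (b) from the unitarity statement following \eqref{htod} --- so the verification closes.
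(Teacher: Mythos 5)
Your proposal is correct, but it reaches the conclusion by a genuinely different route than the paper. The paper works entirely with projector algebra: it writes out $P^X_M$ explicitly as in \eqref{projection}, verifies $P^X_{M'}P^X_M\big|_{L^X_M}=P^X_M$ together with the refinement identity \eqref{partition}, derives the composition law \eqref{compositioninclusion} from these, and --- crucially --- proves that $\partial^D$ is an endomorphism of $L^D_M$ via the commutator $[\partial^D,P^D_M]=0$ in \eqref{commutation}. You instead establish the subspace nesting $L^X_M\subseteq L^X_{M'}$ directly at the level of bases ($\mathbb{Z}_M\subseteq\mathbb{Z}_{M'}$ for Dirichlet, refinement of boxes for Haar), whence $I^X_{M,M'}$ is recognised as the literal subspace inclusion and the semigroup law (a) becomes trivial; you then realise $\mathbf{Func}$ as a subcategory of finite-dimensional Hilbert spaces, inheriting associativity and units. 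Your reading of \eqref{arrowsfunc} as a \emph{generating} family is a genuine improvement in precision: as you note, the displayed set is not literally closed under composition ($(\partial^D)^2$ and mixed composites such as $I^H_{M,M'}\circ I^{DH}_M$ are absent), a point the paper glosses over with ``closure under composition follows.'' What the paper's route buys in exchange is reusability: the commutator \eqref{commutation} is invoked again in the corollary on compatibility of $\partial^D$ with $I^D_{M,M'}$ and in the proof of the $\mathbf{Seq}$ theorem, so it is not mere bookkeeping there. One small omission on your side: you assert the well-typedness $(\partial^D)^k:L^D_M\to L^D_M$ without argument, whereas this is the one generator whose typing is not formally automatic; the gap is closed in one line from your own setup, since $L^D_M=\operatorname{span}\{e_n:n\in\mathbb{Z}_M\}$ and $\partial e_n=2\pi i n\,e_n$, so $\partial$ acts diagonally on the spanning basis and preserves each $L^D_M$ (equivalently, $[\partial^D,P^D_M]=0$, which is the form the paper proves). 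With that sentence added, your argument is complete and, if anything, states the theorem in a sharper form than the original.
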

\begin{proof}
We start by realising that the maps $I^X_{M,M'}$ are indeed an inclusion.  The projectors are explicitly given by
\begin{align}
    \label{projection}
        (P^X_{M}F)(x)=\sum_{m\in \mathbb{N}_M}\chi^{X,M}_m(x)\braket{\chi^{X,M}_m,F}_L,
    \end{align}
    thus $P^X_{M'}P^X_M\big|_{L^X_M}=P^X_M$. Furthermore, the coarse basis elements $\chi^{X,M}_{m}$ are projected into the finer ones $\chi^{X,M}_{m'}$
    \begin{align}
    \label{partition}
        (P_{M'}\;\chi^{X,M}_{m})(x)=\sum_{l=0}^{k-1}\chi^{X,M'}_{km+l}(x), \;\; \text{for}\;\; k=\frac{M'}{M}.
    \end{align}
Composition of inclusion maps follows from \cref{projection,partition}
\begin{align}
\label{compositioninclusion}
I^X_{M',M''}I^X_{M,M'}=P^X_{M''}P^X_{M'}\;P^X_{M'}P^X_{M}=P^X_{M''}P^X_{M}=I^{X}_{M,M''}.
\end{align}
The change of basis maps $I^{D,H}_M,\; I^{H,D}_M$ are, by construction, linear isomorphisms. The identity map \(\mathrm{Id}_{L^X_M}\) is just the restriction of the identity on $L$ to $L^X_M$ and thus also a linear isomorphism. Regarding the derivative $\partial^D$, by construction it is the usual derivative of $L$ spaces restricted to $L^D_M$, furthermore it preserves the $L^D_M$ spaces. This follows from the fact that
\begin{align}
\label{commutation}
   [\partial^D, P_M^D]=0 \;\;\Longrightarrow \;P^D_M(\partial^DF)=\partial^DP^D_MF=\partial^DF,\;\;\; \mathrm{for} \;F \in L^D_M\; \mathrm{and} \;\forall \; M\in P_{\mathbb{O}},
\end{align}
which is computed directly from \cref{partiald} and \cref{projection}, hence $\partial^D$ is an endomorphism. 
Henceforth, all morphisms in \cref{arrowsfunc} are linear maps on $L^X_M \subset L$ and thus associativity and closure under composition follows. 
\end{proof}
Severall comments are in order:
\begin{enumerate}
    \item For $P_\mathbb{O}$, the inclusion maps provides us with a "nesting" of square integrable subspaces. For instance, given $M< M'< M''$ in $P_{\mathbb{O}}$ one has $L_M^X\subset L^X_{M'}\subset L^X_{M''}$.   
    \item The composition rule \cref{compositioninclusion} provides us with a cylindrical consistent structure. This implies that the hierarchy of subspaces \(\{L^X_M\}_{M\in P_{\mathbb O}}\) forms a 
\emph{direct system} of Hilbert spaces with compatible embeddings.  
In HR the direct system structure is crucial as it provides a natural construction of the continuum Hilbert space via inductive limits.

\item We do not consider $\partial^H$ in the category because it is not a bounded operator thus it is not a morphism in \textbf{Func}. 

\item The fact that $\partial^D$ preserves $L^D_M$ has important consequences in the renormalisation. This is stated in the following corollary
\end{enumerate} 

\begin{corollary}
    Let $\partial^D$ be given in \cref{partiald} and let \cref{Funccat} hold. Then $\partial^D$ is compatible with $I^D_{M,M'}$.
\end{corollary}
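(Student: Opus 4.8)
The plan is first to pin down what \emph{compatibility} means in the present context. Since the embeddings $I^D_{M,M'}$ organise the subspaces $\{L^D_M\}_{M\in P_{\mathbb{O}}}$ into a direct system (comment 2 following \cref{Funccat}), the natural notion is that $\partial^D$ commutes with the inclusions, i.e.\ that for all $M<M'$ the square
\[
\partial^D\circ I^D_{M,M'} \;=\; I^D_{M,M'}\circ \partial^D
\]
holds as an identity of maps $L^D_M\to L^D_{M'}$. This is precisely the condition guaranteeing that the family of Dirichlet derivatives descends to a single well-defined operator on the inductive limit, i.e.\ the continuum derivative. My strategy is to reduce both sides to the restriction of the one ambient operator $\partial$ on $L$.

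The key steps, in order, are as follows. First I would simplify the inclusion: by \cref{projection,compositioninclusion} one has $P^D_{M'}P^D_M\big|_{L^D_M}=P^D_M\big|_{L^D_M}$, and since $P^D_M$ acts as the identity on $L^D_M$, the map $I^D_{M,M'}$ is nothing but the canonical subspace inclusion $L^D_M\subset L^D_{M'}$. Second, I would invoke \cref{commutation}, which states $[\partial^D,P^D_M]=0$ and hence $\partial^D(L^D_M)\subseteq L^D_M$ for every $M$; in particular $\partial^D$ is, at each resolution, the restriction of a single operator $\partial$. Third, I evaluate both sides on an arbitrary $F\in L^D_M$: the right-hand side gives $\partial F$ viewed inside $L^D_M\subset L^D_{M'}$, while the left-hand side first includes $F$ into $L^D_{M'}$ and then applies the restriction of $\partial$ to $L^D_{M'}$, which again returns $\partial F$. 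Since $\partial F\in L^D_M\subset L^D_{M'}$ in both cases, the two expressions coincide and the square commutes.

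The only point requiring genuine care is the consistency of the derivative across resolutions: one must check that $\partial$ commutes not only with $P^D_M$ (as in \cref{commutation}) but with $P^D_{M'}$ as well, so that restricting $\partial$ to the finer space $L^D_{M'}$ agrees, on the image of $L^D_M$, with restricting it to $L^D_M$. I expect this to be the main---though mild---obstacle, and I would settle it in the Fourier basis: on the modes $e_n$ with $n\in\mathbb{Z}_M\subseteq\mathbb{Z}_{M'}$ one has $\partial e_n=2\pi i\,n\,e_n$ independently of the resolution, so $\partial$ acts diagonally with resolution-independent eigenvalues on the nested spaces. This is exactly where the smoothness of the Dirichlet kernels is essential: the analogous statement fails for the Haar derivative $\partial^H$, whose distributional action does not preserve the nesting, which is why the corollary is formulated for $\partial^D$ alone.
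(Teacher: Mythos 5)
Your proof is correct and takes essentially the same route as the paper's: both reduce compatibility to the commutation relation \cref{commutation}, $[\partial^D,P^D_M]=0$, applied at both resolutions, the paper doing so in a single operator identity $\partial^D I^D_{M,M'}=\partial^D P_{M'}P_M=P_{M'}\partial^D P_M=P_{M'}P_M\partial^D=I^D_{M,M'}\partial^D$. The one ``obstacle'' you flag---commutation of $\partial$ with $P^D_{M'}$ at the finer scale---is not an extra step, since \cref{commutation} is stated for all $M\in P_{\mathbb{O}}$ (your Fourier-mode check with $\partial e_n = 2\pi i\,n\,e_n$ simply re-verifies it), so your pointwise evaluation on $F\in L^D_M$ is just a more explicit unpacking of the same argument.
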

\begin{proof}
    Using \cref{commutation} one gets
    \begin{align}
    \label{partialcilindrical}
        \partial^D I^D_{M,M'}=\partial^D(P_M'P_M)=P_{M'}\partial^DP_M=P_{M'}P_M\partial^D=I^D_{M,M'}\partial^D,
    \end{align}
    for any $M,M' \in P_{\mathbb{O}}$ such that $M<M'$.
\end{proof}

Physically, equation~\eqref{partialcilindrical} expresses the cylindrical consistency of the discrete Dirichlet derivative with respect to the inclusion maps \(I^D_{M,M'}\). This guarantees that the discretised derivative operator 
\(\partial^D\) does not introduce spurious scale–dependent artefacts. \\

Similarly to \cref{Funccat}, the resolution spaces $\{l_M\}_{M\in P_{\mathbb{O}}}$ and their associated maps can be unified in categorical terms by means of the following theorem:

\begin{theorem}
Let \(P_{\mathbb O}\) be the poset of resolution scales defined in \cref{a.3}. For each $M\in P_{\mathbb O}$, let $l_M$ denote the space of square summable sequences at resolution $M$ with inner product given in \cref{a.9}, and let $I_M^D:l_M\to L_M^D\subset L$ be the Dirichlet embedding with adjoint $\overline I_M^D:L_M^D\to l_M$.  
Then the collection of objects $\{l_M\}_{M\in P_{\mathbb O}}$ together with the maps
\begin{align}
\{\partial_M^X,\;i_{MM'}:=\overline I_{M'}^D I_M^D,\;\mathrm{Id}_{l_M}^X:=\overline{I}^X_MI_M^X\},
\end{align}
for $X=\{D,H\}$, defines a category of sequences \textbf{Seq}, where the composition is the ordinary composition of linear maps.  \end{theorem}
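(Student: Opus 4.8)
The plan is to mirror the proof of \cref{Funccat}, transporting each structural property from the function spaces \(L_M^X\) to the sequence spaces \(l_M\) through the isometric embeddings. First I would observe that every generator in the list is a linear map between the stated objects: the discrete derivatives \(\partial_M^X=\overline I_M^X\partial^X I_M^X\) are endomorphisms \(l_M\to l_M\) by \cref{partiald,partialh}, the coarse grainings \(i_{MM'}=\overline I_{M'}^D I_M^D\) send \(l_M\) to \(l_{M'}\), and the maps \(\mathrm{Id}_{l_M}^X=\overline I_M^X I_M^X\) are endomorphisms of \(l_M\). Since all generators, and hence all their finite compositions, are linear maps between the \(l_M\), composition is ordinary composition of linear maps; associativity and closure of the collection of such maps are therefore immediate, exactly as in the last line of the proof of \cref{Funccat}.

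The substantive content is to verify the two identities that make this a bona fide category. First, the maps \(\mathrm{Id}_{l_M}^X\) must be genuine categorical identities. This follows from the isometry property \eqref{a.13}, \(\overline I_M^X I_M^X=1_{l_M}\), which holds for both \(X=D\) and \(X=H\) because the Haar and Dirichlet kernels are each orthogonal bases normalised precisely so that the embedding is an isometry. Consequently \(\mathrm{Id}_{l_M}^D=\mathrm{Id}_{l_M}^H=1_{l_M}\), and the left and right unit laws \(\mathrm{Id}_{l_{M'}}\circ\varphi=\varphi=\varphi\circ\mathrm{Id}_{l_M}\) hold trivially for every morphism \(\varphi:l_M\to l_{M'}\). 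Second, the coarse grainings must compose along the poset. Inserting \(I_{M'}^D\overline I_{M'}^D=P_{M'}^D\) and using that the image of \(I_M^D\) lies in \(L_M^D\subset L_{M'}^D\), so that \(P_{M'}^D I_M^D=I_M^D\), gives
\begin{align}
i_{M'M''}\,i_{MM'}=\overline I_{M''}^D I_{M'}^D\,\overline I_{M'}^D I_M^D=\overline I_{M''}^D P_{M'}^D I_M^D=\overline I_{M''}^D I_M^D=i_{MM''},
\end{align}
reproducing the composition law \(i_{M'M''}i_{MM'}=i_{MM''}\) already observed in the lattice formulation of Section~\ref{hamiltonianrenormalisation}.

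It remains to record that the Dirichlet derivative is cylindrically consistent with the coarse graining, the analogue in \textbf{Seq} of the corollary to \cref{Funccat}. Conjugating the commutation relation \eqref{commutation} through the embeddings, and again using \(P_{M'}^D I_M^D=I_M^D\) together with the fact that \(\partial^D\) preserves \(L_M^D\), one obtains \(\partial_{M'}^D\,i_{MM'}=\overline I_{M'}^D\partial^D I_M^D=i_{MM'}\,\partial_M^D\), so that \(\partial^D\) descends consistently to every resolution. I expect the main obstacle to be precisely this bookkeeping of adjoints and projectors, and in particular the asymmetric role of the two kernels: the coarse graining is built from Dirichlet embeddings alone, so while \(\partial_M^D\) commutes with \(i_{MM'}\) as above, the Haar derivative \(\partial_M^H=\overline I_M^H\partial^H I_M^H\) satisfies no analogous relation, since the two are built from different kernels and \(\partial^H\) acts distributionally and does not preserve the Dirichlet subspaces \(L_M^D\). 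This is harmless for the categorical axioms — \(\partial_M^H\) is still a legitimate endomorphism of \(l_M\) and hence a morphism — but it shows that the clean cylindrical structure is special to the Dirichlet choice, consistent with the remark that continuum Hamiltonian renormalisation works with \(\partial^D\) rather than \(\partial^H\).
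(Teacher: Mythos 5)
Your proposal is correct, and its core computations coincide with the paper's: the coarse-graining composition law is proved by the identical insertion $I^D_{M'}\overline I^D_{M'}=P^D_{M'}$ followed by $P^D_{M'}I^D_M=I^D_M$ (cf.\ \cref{cylindrical}), and the identification of $\mathrm{Id}^X_{l_M}$ with $1_{l_M}$ rests on the same isometry/orthogonality fact $\braket{\chi^{X,M}_m,\chi^{X,M}_{m'}}=M\delta_{m,m'}$ that the paper computes explicitly. Where you genuinely streamline is in the unit laws: the paper verifies the compositions $\partial^X_M\,\mathrm{Id}^X_{l_M}$, $\mathrm{Id}^D_{l_M}\partial^D_M$ one by one, and for $\mathrm{Id}^H_{l_M}\partial^H_M$ it resorts to a separate adjointness argument, \cref{identityone}, because the projector does not commute with $\partial^H$; your observation that $\overline I^X_M I^X_M=1_{l_M}$ by \eqref{a.13} makes \emph{all} of these checks trivial at once, which is a cleaner and equally rigorous route. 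Similarly, the paper establishes closure by explicit computation of $(\partial^D_M)^2=\overline I^D_M(\partial^D)^2 I^D_M$ (via \cref{commutation}) and of $(\partial^H_M)^2f_M(m)=\frac{1}{M^2}[f_M(m+2)-2f_M(m+1)+f_M(m)]$, whereas you subsume closure under the wholesale remark that all finite composites of linear generators are linear maps between the $l_M$; for the bare categorical axioms your argument suffices, though the paper's explicit formulas carry extra information (e.g.\ that $(\partial^D_M)^2$ is again a conjugated continuum operator). Finally, your cylindrical-consistency relation $\partial^D_{M'}\,i_{MM'}=i_{MM'}\,\partial^D_M$ checks out—both sides reduce to $\overline I^D_{M'}\partial^D I^D_M$ using $P^D_{M'}I^D_M=I^D_M$ and \cref{commutation}—and it is a genuine addition: the paper proves the analogous statement only on the $\mathbf{Func}$ side (the corollary to \cref{Funccat}, \cref{partialcilindrical}), not within this proof. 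Your closing remark that no such relation holds for $\partial^H_M$ against the Dirichlet coarse graining, while harmless for the axioms, is consistent with the paper's case-4 discussion of the $U(1)^3$ flows.
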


\begin{proof}
For the composition of Dirichlet derivatives, we use \cref{commutation} to prove
    \begin{gather}
      \partial_M^D \circ\partial_M^D= \overline{I}^D_M \; \partial^D P^D_M \; \partial^D \; I_M^D=\overline{I}^D_M \;(\partial^D)^2\; I^D_M,
    \end{gather}
Moreover, from \cref{Funccat} we know the operator $\partial^D$ is closed under composition, thus $(\partial^D)^2:L^D_M\longrightarrow L^D_M$ from which closeness under composition of Dirichlet derivatives follow. 
\vspace{1mm}
Haar derivatives are gain more subtle since \cref{commutation} no longer holds. Instead, we use directly \cref{partialh}  
\begin{align}
   (\partial^H_M)^2 f_M(m)=\frac{1}{M^2}\big[f_M(m+2)-2f_M(m+1)+f_M(m)\big]. 
\end{align}
The right-hand side is clearly a sequence indexed by $m\in\mathbb Z_M$, so it lies in $l_M$. Linearity is immediate from the formula, hence $(\partial^H_M)^2$ is a linear endomorphism of $l_M$. For the composition of embedding maps, let $M<M'<M'' \in P_{\mathbb{O}}$, then 
\begin{align}
\label{cylindrical}
i^D_{M_2,M_3}i^D_{M_1,M_2}=\overline{I}^D_{M_3}\;I^D_{M_2}\;\overline{I}^D_{M_2}\;I^D_{M_1}=\overline{I}^D_{M_3}\;P^D_{M_2}\;I^D_{M_1}=\overline{I}^D_{M_3}\;I^D_{M_1}=i^D_{M_1,M_3},
\end{align}
associativity follows directly from the linearity of the maps, equivalently, one can prove it using \cref{cylindrical} 
\begin{align}
i_{M_3,M_4}(i_{M_2,M_3}i_{M_1,M_2})=i_{M_3,M_4}i_{M_1,M_3}=i_{M_1,M_4},
\end{align}
and similarly for the left associativity.
Regarding the identity, it is easy to see that the maps $\mathrm{Id}_{l_M}^X$ are indeed the identity in $l_M$, for instance
\begin{align}
    (\overline{I}^X_MI^X_M\;f_M)(x)=\frac{1}{M}\sum_{m'\in \mathbb{N}_M}f_M(m')\;\braket{\chi^{X,M}_m,\; \chi^{X,M}_{m'}}_{L^{X}_M}=f_M(m),
\end{align}
furthermore $\mathrm{Id}_{l_M}^X=\overline{\mathrm{Id}}_{l_M}^X$ and $\mathrm{Id}_{l_M}^X\;\mathrm{Id}_{l_M}^X=\mathrm{Id}_{l_M}^X$. Now we prove the left and right composition between the identity and the other morphisms in \textbf{Seq}
\begin{align}
\partial^X_M\;\mathrm{Id}_{l_M}^X&=\overline{I}^X_M \;\partial^XI^X_M\;\overline{I}^X_M \;I^X_M=\overline{I}^X_M \;\partial^X P_M^X\;I^X_M=\partial^X_M,\\
\mathrm{Id}_{l_M}^D\;\partial^D_M&=\overline{I}^D_M \;I^D_M\; \overline{I}^D_M \;\partial^DI^D_M=\overline{I}^D_M \;P^D_M\;\partial^DI^D_M=\partial^D_M, 
    \end{align}

For $\mathrm{Id}_{l_M}^H\;\partial^H_M$ once again we cannot make use of the commutativity of the projector and $\partial^ H$, instead we use the properties of the orthogonal projection and \cref{a.11}. For all $f_M,g_M\in l_M$ and $M\in P_{\mathbb{O}}$ we have
\begin{align}
\label{identityone}
\braket{\mathrm{Id}_{l_M}^H\partial^H_M\;f_M,g_M}_{l_M}=\braket{\partial^H_M\;f_M,\;\mathrm{Id}_{l_M}^Hg_M}_{l_M}=\braket{\partial^H_M\;f_M,\;g_M}_{l_M}.
\end{align}
For the closeness under composition of $\mathrm{Id}^{D}_M\partial^H_M$, $\mathrm{Id}^{H}_M\partial^D_M$ and $\mathrm{Id}^{X}_Mi^{D}_{M,M'}$ the same steps as in \cref{identityone} are performed. The related cases where the identity is to the right follow directly since $\overline{I}^X_MI^X_Mf_M(m)=f_M(m)$. Associativity under composition is direct since all the maps are linear.  
\end{proof}
\subsection{Example: The $U(1)^3$ model for 3+1 quantum gravity}
To illustrate how different choices of embeddings, discretisations, and coarse-graining maps affect the renormalisation flow and the rate of convergence towards the fixed point, 
we will briefly revisit and extend the computations performed in \cite{RenormalisationIX}.

The $U(1)^3$ model can be regarded as a week (Newton constant) limit of Euclidean signature general relativity \cite{Smolinuone}. The details of its quantisation can be found in \cite{thiemanexact}. We refer to \cite{RenormalisationIX,Thiemannfock} for all the details regarding renormalisation. We consider spacetimes of the form $\mathbb{R} \times \sigma$, where $\sigma$ is a three-manifold compactified with periodic boundary conditions. In this setting, we restrict our attention to the polynomial Hamiltonian constraint
\begin{align}
\label{hamiltonianconstraint}
H[N]:= \int_\sigma\; d^3x\; A_a^j\; H^a_j(N,E),\; H^a_j(N,E):=\epsilon_{jkl}\;(N\;E^{[a}_k E^{b]}_l)_{,b},
\end{align}
where $(A_a^j, E^a_j)$ are the conjugate pairs of variables coordinatising the real phase space, and $N$ denotes the scalar lapse function, which plays the role of the to--be--renormalised coupling parameter. In~\cite{RenormalisationIX}, the coupling parameters were embedded using both Haar and Dirichlet bases. For the Haar kernels we have
\begin{align} \label{haarcouplings}
    N_{M,b}(m_1,m_2,m_3)&:=N_{,b}\left(\frac{m_1}{M}\right)\delta_{m_1,m_2}\delta_{m_1,m_3}, \;\;\;
N_M(m_1,m_2,m_3):=N\left(\frac{m_1}{M}\right)\delta_{m_1,m_2}\delta_{m_1,m_3},
\end{align}
while the Dirichlet couplings are 
\begin{align}
   N_{M,b}(m_1,m_2,m_3)&:= \braket{N_{,b},\chi^{D,M}_{m_1}\chi^{D,M}_{m_2}\chi^{D,M}_{m_3}}, \;\;\;  N_{M}(m_1,m_2,m_3):= \braket{N,\chi^{D,M}_{m_1}\chi^{D,M}_{m_2}\chi^{D,M}_{m_3}}.
\end{align}
\begin{enumerate}
    \item Dirichlet coarse graining, Dirichlet couplings and Dirichlet derivatives.
    
    This case was studied in \cite{RenormalisationIX}. The authors found that the couplings were already at the fixed point. This nice behaviour is due to the fact that, in this scenario, all the maps are cylindrically consistent, as depicted in \cref{cylindrical,commutation,demb}
    \item Dirichlet coarse graining, Haar couplings and Haar derivatives

    This case was also analysed \cite{RenormalisationIX} and led to non-trivial renormalisation flows, where the number of steps to reach the fixed point is parametrised by logarithmic-type functions for both $N_{M,b}$ and $N_M$ where $N$ was assumed to have compact momentum support.

    \item Dirichlet coarse graining, Haar couplings and Dirichlet derivatives.

    We analyse this further mixing in what follows. The Haar coupling $N_M$ does not depend on the derivative, hence its flow is equivalent to the one computed in \cite{RenormalisationIX}
    and flows to the fixed point as in 2. Therefore, we only need to renormalise $N_{M,b}$, where the mix between the Haar and Dirichlet happens. The renormalisation flow equation is given by
    \begin{align}
\label{flowfock}
N_M^{(r+1)}(m_1,m_2,m_3)=\frac{1}{M^9}\sum_{m'_1,m'_2,m'_3\in\mathbb{N}_{3M}^3}\prod_{s=1}^3i_{M,3M}(m'_s,m_s)\;N_{3M,b}^{(r)}(m_1',m_2',m_3'),
\end{align}
where we have chosen $M<M'(M):=3M$, $i_{M,3M}(\cdot,\cdot)$ is the kernel of $i_{M,3M}$ and $(r)$ labels the renormalisation step. Notice that the equation is independent of the chosen kernel. We start the flow by providing the initial Haar couplings $N^{(0)}_{M,b}$ with Dirichlet derivatives. We plug them into the renormalisation flow equations \cref{flowfock} to relate them to the --next--step--coupling
$N^{(1)}_{M,b}$
    \begin{align}
    \label{flowone}
    N_{M,b}^{(1)}(m_1,m_2,m_3)&=\frac{1}{9}\sum_{m_1',m'_2,m'_3\in \mathbb{N}_{3M}^3}\prod_{s=1}^3 I^D_{M3M}(m_s',m_s)N^{(0)}(\frac{m_1'}{3M})(\partial^D_{3M}\delta_{m_1'})(m_2')\delta_{m_1',m_3'}\nonumber \\ 
    &= \frac{1}{(3M)^3}\sum_{\substack{m_1'\in \mathbb{N}_{3M}^3\\ n_1,n_2,n_3 \in \mathbb{Z}_M^3}}e^{3M}_{n_1+n_3}(m_1')\prod_{s=1}^3e^M_{-n_s}(m_s)[\partial^D_{3M,b}\;e^{3M}_{n_2}(m_1')]\;N^{(0)}(\frac{m_1'}{3M})\nonumber \\
    &= \frac{1}{(3M)^3}\sum_{\substack{m_1'\in \mathbb{N}_{3M}^3\\ n_1,n_2,n_3 \in \mathbb{Z}_M^3}}(2\pi in_2^b)\;e^{3M}_{n_1+n_2+n_3}(m_1')\;N^{(0)}(\frac{m_1'}{3M})\prod_{s=1}^3e^{M}_{-n_s}(m_s) \nonumber \\
    &= \sum_{\substack{n_1,n_2,n_3 \in \mathbb{Z}_M^3\\n_o\in \mathbb{Z}_{M_0}^3}}(2\pi in_2^b)\;\delta_{n_1+n_2+n_3+n_0,0 \;\text{mod}(3M)}\hat{N}^{(0)}(n_0)\prod_{s=1}^3e^{M}_{-n_s}(m_s),
\end{align}
where in the second line we summed by parts and performed the $m_2',\;m_3'$ sums. 
In the third we applied the derivative and in the last line one we assumed that the lapse function has compact momentum support. 
This expression must flow to the fixed point value
\begin{align}
    N^{(*)}_{M,b}(m_1,m_2,m_3)=\sum_{\substack{n_1,n_2,n_3 \in \mathbb{Z}_M^3\\n_o\in \mathbb{Z}_{M_0}^3}}(2\pi in_2^b)\;\delta_{n_1+n_2+n_3+n_0,0}\;\hat{N}^{(0)}(n_0)\prod_{s=1}^3e^{M}_{-n_s}(m_s).
\end{align}
Expression (\ref{flowone}) differs from the one obtained for $N^{(1)}_{M}$ only by the factor $2\pi in^b_2$. Since the convergence to the fixed point is parametrised by the number of renormalisation steps $(r)$, its convergence rate is the same.
\item Dirichlet coarse graining, Dirichlet couplings and Haar derivative.

This case can be inferred from the previous cases. Since the coarse graining and the couplings are Dirichlet, the coupling $N_M$ is already at the fixed point. $N_{M,b}$ must be renormalised, its computation for the flow follows the same steps as  $N_{M,b}$ in the case 2.

 \end{enumerate}
\section{Functorial relations between \textbf{Seq} and \textbf{Func}}
The use of functors provides a natural and mathematically coherent way to relate the discrete and continuum  formulations of the Hamiltonian renormalisation programme. 
In our setting, the embedding and its adjoint act as functors between the categories \textsf{Seq} and \textsf{Func}.
This functorial viewpoint provides a unified categorical language in which lattice and functional renormalisation can be directly compared and related. 
Moreover, natural transformations between these functors encode physical equivalences between renormalisation flows, allowing one to identify when different discretisation or embedding choices lead to the same continuum theory. 
\begin{theorem}
    Let $\textbf{Seq}_D:=\textbf{Seq}\big|_{X=D}$ and $\textbf{Func}_D:=\textbf{Func}\big|_{X=D}$ be the Dirichlet subcategories of \textbf{Seq} and \textbf{Func} respectively. Then, the Dirichlet embedding $I^D$ is a functor from $\textbf{Seq}_D$ to $\textbf{Func}_D$, where its action on the objects $I_M^D:l_M\to L_M^D$  follows \cref{demb}, and its action on the morphisms closes the commutative diagrams  
    \[
\begin{array}{ccc}
\begin{tikzcd}
l_M \arrow[r, "i_{MM'}"] \arrow[d, "I_M^D"'] & l_{M'} \arrow[d, "I_{M'}^D"] \\
L_M^D \arrow[r, "I^D_{M,M'}"] & L_{M'}^D
\end{tikzcd}\;\;
&
\begin{tikzcd}
l_M \arrow[r, "\partial_M^D"] \arrow[d, "I_M^D"'] & l_M \arrow[d, "I_M^D"] \\
L_M^D \arrow[r, "\partial^D"] & L_M^D
\end{tikzcd}\;\;
&
\begin{tikzcd}
l_M \arrow[r, "\mathrm{Id}_{l_M}^D"] \arrow[d, "I_M^D"'] & l_M \arrow[d, "I_M^D"] \\
L_M^D \arrow[r, "\mathrm{Id}_{L_M}^D"] & L_M^D
\end{tikzcd}.
\end{array}
\]
Moreover, the adjoint map $\overline I_M^D$, is a functor from $\textbf{Func}_D$ to $\textbf{Seq}_D$ whose action on the objects $\overline{I}_M^D:L^D_M\to l_M$  follows \cref{demb}, and its action on the morphisms closes the aforementioned commutative diagrams with the appropriate inversion of arrows.  
\end{theorem}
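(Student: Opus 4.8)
The plan is to verify the two functor axioms---preservation of identities and preservation of composition---for $I^D$ on objects and morphisms, and then to obtain the statement for the adjoint $\overline{I}^D$ by a dual argument. Since a functor between two concrete categories whose morphisms are already linear maps is determined by how it acts, the essential content is that each square above commutes, i.e. that $I^D$ intertwines the \textbf{Seq}-morphism with the corresponding \textbf{Func}-morphism, and that these intertwining relations are themselves compatible with composition. I would organise the proof around the three generating morphism types (coarse grainings, Dirichlet derivatives, identities) and treat each square separately, since closure then follows from the fact that every morphism in either category is a finite composite of these.

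First I would establish the three commuting squares. For the coarse-graining square I would compute $I_{M'}^D\circ i_{MM'}$ using the definition $i_{MM'}=\overline{I}_{M'}^D I_M^D$ together with $P_{M'}^D=I_{M'}^D\overline{I}_{M'}^D$ from \cref{a.13a}, giving $I_{M'}^D i_{MM'}=P_{M'}^D I_M^D$; on the other side $I_{M,M'}^D I_M^D=P_{M'}^D P_M^D I_M^D$, and since $P_M^D I_M^D=I_M^D$ (as $\operatorname{im}I_M^D=L_M^D$) the two agree. For the derivative square I would simply invoke \cref{partiald}, where $\partial_M^D:=\overline{I}_M^D\partial^D I_M^D$ is defined, so that $I_M^D\partial_M^D=I_M^D\overline{I}_M^D\partial^D I_M^D=P_M^D\partial^D I_M^D=\partial^D I_M^D$ using the commutation $[\partial^D,P_M^D]=0$ from \cref{commutation} and $P_M^D\partial^D=\partial^D P_M^D$ acting as the identity on $\operatorname{im}I_M^D$. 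The identity square is immediate from $\operatorname{Id}_{l_M}^D=\overline{I}_M^D I_M^D=1_{l_M}$ and the definition of $\operatorname{Id}_{L_M}^D$ as the restriction of $1_L$.

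With the squares in place, I would verify the functor axioms proper. Preservation of identities is exactly the third square. Preservation of composition follows because for any two composable \textbf{Seq}-morphisms $g,h$ one has $I^D(g\circ h)=I^D_{\mathrm{target}}\circ(g\circ h)$ matched against $F(g)\circ F(h)$ via the already-established squares and the cylindrical consistency relations \cref{cylindrical} and \cref{compositioninclusion}; concretely, stacking two coarse-graining squares reproduces $i_{M_2M_3}i_{M_1M_2}=i_{M_1M_3}$ on top and $I_{M',M''}^D I_{M,M'}^D=I_{M,M''}^D$ on the bottom, so the outer rectangle commutes. For the adjoint I would run the same argument with $\overline{I}^D$ in place of $I^D$, reversing each arrow; the key algebraic inputs are again $\overline{I}_M^D I_M^D=1_{l_M}$ and the self-adjointness of $P_M^D$, so that $\overline{I}_M^D$ intertwines $I_{M,M'}^D$ with $i_{MM'}$, $\partial^D$ with $\partial_M^D$, and the two identity morphisms.

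**The main obstacle** I anticipate is not any single computation but rather a well-posedness subtlety in the adjoint direction: $\overline{I}_M^D$ is defined on $L_M^D$, so for $\overline{I}^D$ to be a genuine functor $\textbf{Func}_D\to\textbf{Seq}_D$ one must check that the \textbf{Func}-morphisms map into the correct domains, in particular that the inclusion $I_{M,M'}^D:L_M^D\hookrightarrow L_{M'}^D$ composed with $\overline{I}_{M'}^D$ lands back in $l_M$ and reproduces $i_{MM'}$ rather than some map with mismatched resolution labels. I would address this by carefully tracking the source and target resolutions through the identity $\overline{I}_{M'}^D I_{M,M'}^D=\overline{I}_{M'}^D P_{M'}^D P_M^D\big|_{L_M^D}=\overline{I}_{M'}^D I_M^D=i_{MM'}$, where the middle equality uses $P_{M'}^D$ acting as identity on $L_{M'}^D\supset L_M^D$ and $P_M^D\big|_{L_M^D}=1_{L_M^D}$. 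This confirms the arrow-reversed coarse-graining square and, together with the analogous checks for $\partial^D$ and the identities, completes the functoriality of $\overline{I}^D$.
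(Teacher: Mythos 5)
Your proposal is correct and follows essentially the same route as the paper: establish the three intertwining squares for the generating morphisms (coarse graining, Dirichlet derivative, identity), obtain preservation of composition by gluing adjacent squares via the cylindrical consistency relations $i_{M_2M_3}\,i_{M_1M_2}=i_{M_1M_3}$ and $I^D_{M',M''}\,I^D_{M,M'}=I^D_{M,M''}$, and dualise for $\overline{I}^D$. If anything you are more explicit than the paper, which asserts the squares and moves on, whereas you derive each one from $i_{MM'}=\overline{I}^D_{M'}I^D_M$, $P^D_M=I^D_M\overline{I}^D_M$, $\overline{I}^D_MI^D_M=1_{l_M}$ and $[\partial^D,P^D_M]=0$; your closing intertwining statement $\overline{I}^D_{M'}\circ I^D_{M,M'}=i_{MM'}\circ\overline{I}^D_M$ also correctly repairs the slight domain mismatch in your displayed chain.
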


\begin{proof}
We start by proving that the embedding is a functor for the objects. By definition $I^D_Mf_M=:F_M \in L^D_M$, for all $f_M\in l_M$, hence $I^D_M(l_M)=L^D_M$. For the morphisms in the Dirichlet subcategory, the commutative diagrams imply the relations
\begin{align}
    I_{M'}^D\circ i_{MM'} = I^D_{M,M'}\circ I_M^D,\qquad
I_M^D\circ\partial_M^D = \partial^D\circ I_M^D,\qquad
I_M^D\circ\mathrm{Id}_{l_M}^D = \mathrm{Id}_{L_M^D}\circ I_M^D.
\end{align}
To prove the closeness under composition, we glue two commutative diagrams at different resolutions $M<M'<M'' \in P_{\mathbb{O}}$
\[
\begin{tikzcd}[column sep=huge]
l_M \arrow[r, "i_{M,M'}"] \arrow[d, "I_M^D"'] &
l_{M'} \arrow[r, "i_{M',M''}"] \arrow[d, "I_{M'}^D"'] &
l_{M''} \arrow[d, "I_{M''}^D"] \\
L_M^D \arrow[r, "I^D_{M,M'}"] &
L_{M'}^D \arrow[r, "I_{M',M''}"] &
L_{M''}^D
\end{tikzcd}
\]
from which follows
\begin{align}
I^D(i_{M',M''}\;i_{M,M'})I^D(f_{M})&=I^D_{M''}(i_{M',M''}\;i_{M,M'}f_{M})=I^D_{M',M''}\;I^D_{M,M'}\;I^D_M(f_M) = I^D(i_{M',M''})\;I^D(i_{M,M'})\;I^D(f_M).
\end{align} 
For the adjoint, we notice that the composition diagram can be obtained by simply gluing the previous diagram vertically and changing the embedding labels 
\[
\begin{tikzcd}[row sep=huge, column sep=huge]
l_M \arrow[d, "I_M^D"'] \arrow[r, "i_{M,M'}"] & l_{M'} \arrow[d, "I_{M'}^D"'] \arrow[r, "i_{M',M''}"] & l_{M''} \arrow[d, "I_{M''}^D"'] \\
L_M^D \arrow[d, "\overline{I}_M^D"'] \arrow[r, "I^D_{M,M'}"] & L_{M'}^D \arrow[d, "\overline{I}^D_{M'}"'] \arrow[r, "I^D_{M',M''}"] & L_{M''}^D \arrow[d, "\overline{I}^D_{M''}"'] \\
l_M \arrow[r, "i_{M,M'}"] & l_{M'} \arrow[r, "i_{M',M''}"] & l_{M''}
\end{tikzcd}
\]
From the diagram it follows
\begin{align}
\overline{I}^D(I^D_{M',M''}\;I^D_{M,M'})\overline{I}^D(F_{M})&=\overline{I}^D_{M''}(I^D_{M',M''}\;I^D_{M,M'}F_{M})=i_{M',M''}\;i_{M,M'}\;\overline{I}^D_M(F_M)\nonumber \\
& = \overline{I}^D(I^D_{M',M''})\;\overline{I}^D(I^D_{M,M'})\;\overline{I}^D(F_M)
\end{align}
The prove of the closeness under composition of the other morphisms follows the same steps. 
\end{proof}
This result confirms that both the embedding and its adjoint preserve the categorical structure of the Dirichlet sector. The Haar embeddings are not functors of the Haar sector, this is again due to the lack of a continuum bounded operator $\partial^H$.

\section{Conclusion and outlook}
In this contribution, we have laid the ground for a categorical and functorial formulation of the Hamiltonian renormalisation programme 
providing an elegant bridge between functional and lattice renormalisation.
Furthermore, within this framework, we analysed Haar and Dirichlet embeddings and showed that the latter ---as well as its adjoint--- define functors between the corresponding subcategories $\textbf{Seq}_D$ and $\textbf{Func}_D$.

We have also revisited and expanded the analysis of the convergence rate of the renormalisation flow for the coupling parameters of the $U(1)^3$ model in $3+1$ Euclidean gravity, 
investigating the effects of different combinations of Dirichlet and Haar embeddings. Overall, our results highlight the utility of the categorical/functorial perspective as it  
allows a systematic classification of the required linear maps and helps to  
clarify equivalences and ambiguities in different discretisation/projection choices.

This work provides several open lines of research. For instance, in this contribution we only discussed Dirichlet and Haar embeddings but certainly this could be generalised to other interesting kernels, in particular, the generalisation of including Haar coarse grainings $i^H_M$ seems to follow directly from the theorems of section 3 with no further alteration. 

Using the Haar embeddings as a functor to access the Haar sector is however less straightforward since $\partial^H\notin$ \textbf{Func}. One may relax the definition of \textbf{Func} to enlarge the target space of $\partial^H$ to
include distributional derivatives. This comes with several challenges: $\partial^H$ is an unbounded operator in $L$, hence proofs relying on norm estimates or continuity in the $L$ must be adapted, furthermore, while $L_M^H$ is finite-dimensional, $\partial^H$ interpreted distributionally maps into the dual space $(L_M^H)'$, so derivatives are no longer strictly endomorphisms on objects. The equivalence between the two categories would still be somehow loose, since the functor from $\mathbf{Seq}$ to $\tilde{\mathbf{Func}}$ now maps some sequence morphisms into dual spaces. 

This contribution analysed different discretisation/projection choices in a categorical framework, however in order to include the convergence to the fix point one needs to include the notion of inductive limits. As mentioned in section 3, cylindrical consistency provides us with a direct system of Hilbert subspaces $L_M$ from which one can construct the direct limit. 

So far two functors within the Dirichlet subcategories have been identified, which seem to be adjoint functors, this however needs to be proven rigorously. Overall, in a  general sense, functors may  serve as a tool to categorise and compare different projection and embedding choices or analysing equivalences, and ambiguities in the projection/discretisation choices.

\printbibliography

@inbook{Weinbergrenormalisation,
    author = "Weinberg, S.",
    title = "{Ultraviolet divergences in quantum theories of gravitation}",
    booktitle = "{General Relativity}: {An Einstein Centenary Survey}",
    pages = "790--831",
    year = "1980"
}

@article{sagnotti,
title = "Quantum gravity at two loops",
journal = "Physics Letters B",
volume = "160",
number = "1",
pages = "81 - 86",
year = "1985",
issn = "0370-2693",
doi = "https://doi.org/10.1016/0370-2693(85)91470-4",
url = "http://www.sciencedirect.com/science/article/pii/0370269385914704",
author = "M. H. Goroff and A. Sagnotti",
abstract = "We show that the S matrix of pure Einstein gravity diverges at the two-loop order in four dimensions."
}

@article{uvtwoloops,
title = "The ultraviolet behavior of Einstein gravity",
journal = "Nuclear Physics B",
volume = "266",
number = "3",
pages = "709 - 736",
year = "1986",
issn = "0550-3213",
doi = "https://doi.org/10.1016/0550-3213(86)90193-8",
url = "http://www.sciencedirect.com/science/article/pii/0550321386901938",
author = "M. H. Goroff and A. Sagnotti",
}

@article{astrid,
      title={Asymptotically safe gravity}, 
      author={A. Eichhorn},
      journal={57th International School of Subnuclear Physics: In
Search for the Unexpected (ISSP 2019) Erice, Italy},
      year={2020},
      archivePrefix={arXiv},
      primaryClass={gr-qc},
       eprint = {2003.00044},
       url={https://arxiv.org/abs/2003.00044},
     }

@article{Thiemannbook,
    author = "Thiemann, T.",
    title = "{Introduction to Modern Canonical Quantum General Relativity}",
    eprint = "gr-qc/0110034",
    archivePrefix = "arXiv",
    reportNumber = "AEI-2001-119, AEI-2001-119",
    year = "2001"
}

@article{Rovellibook,
    author = "Rovelli, C.",
    title = "{Loop quantum gravity}",
    eprint = "gr-qc/9710008",
    archivePrefix = "arXiv",
    doi = "10.12942/lrr-1998-1",
    journal = "Living Rev. Rel.",
    volume = "1",
    pages = "1",
    year = "1998"
}

@article{Ashtekarbook,
    author = "Ashtekar, A.",
    title = "{New Hamiltonian Formulation of General Relativity}",
    doi = "10.1103/PhysRevD.36.1587",
    journal = "Phys. Rev. D",
    volume = "36",
    pages = "1587--1602",
    year = "1987"
}

@book{gambinipullin,
    author = "Gambini, R.
and Pullin, J.",
    title = "{A First Course in Loop Quantum Gravity}",
    doi = "10.1093/acprof:oso/9780199590759.001.0001",
    journal = "Oxford University Press",
    year = "2011"
}

@article{Dittrichsone,
    author = "Dittrich, B. and Eckert, F. C. and Martin-Benito, M.",
    title = "{Coarse graining methods for spin net and spin foam models}",
    eprint = "1109.4927",
    archivePrefix = "arXiv",
    primaryClass = "gr-qc",
    doi = "10.1088/1367-2630/14/3/035008",
    journal = "New J. Phys.",
    volume = "14",
    pages = "035008",
    year = "2012"
}

@inbook{Steinhaus,
    author = "Asante, S. K. and Dittrich, B. and Steinhaus, S.",
    title = "{Spin Foams, Refinement Limit, and Renormalization}",
    eprint = "2211.09578",
    archivePrefix = "arXiv",
    primaryClass = "gr-qc",
    doi = "10.1007/978-981-19-3079-9_106-1",
    year = "2023"
}

@article{Bahrone,
    author = "Bahr, B.",
    title = "{On background-independent renormalization of spin foam models}",
    eprint = "1407.7746",
    archivePrefix = "arXiv",
    primaryClass = "gr-qc",
    doi = "10.1088/1361-6382/aa5e13",
    journal = "Class. Quant. Grav.",
    volume = "34",
    number = "7",
    pages = "075001",
    year = "2017"
}

@article{Livine,
    author = "Livine, E. R. and Oriti, D.",
    title = "{Coupling of spacetime atoms and spin foam renormalisation from group field theory}",
    eprint = "gr-qc/0512002",
    archivePrefix = "arXiv",
    doi = "10.1088/1126-6708/2007/02/092",
    journal = "JHEP",
    volume = "02",
    pages = "092",
    year = "2007"
}

@article{Zapata,
    author = "Zapata, J. A.",
    title = "{Loop quantization from a lattice gauge theory perspective}",
    eprint = "gr-qc/0401109",
    archivePrefix = "arXiv",
    doi = "10.1088/0264-9381/21/17/L01",
    journal = "Class. Quant. Grav.",
    volume = "21",
    pages = "L115--L122",
    year = "2004"
}

@article{Perez,
    author = "Perez, A.",
    title = "{On the regularization ambiguities in loop quantum gravity}",
    eprint = "gr-qc/0509118",
    archivePrefix = "arXiv",
    doi = "10.1103/PhysRevD.73.044007",
    journal = "Phys. Rev. D",
    volume = "73",
    pages = "044007",
    year = "2006"
}

@article{qsdiii,
    author = "Thiemann, T.",
    title = "{QSD III: Quantum constraint algebra and physical scalar product in quantum general relativity}",
    eprint = "gr-qc/9705017",
    archivePrefix = "arXiv",
    doi = "10.1088/0264-9381/15/5/010",
    journal = "Class. Quant. Grav.",
    volume = "15",
    pages = "1207--1247",
    year = "1998"
}

@article{renormalizationI,
   title={Hamiltonian renormalisation I: derivation from Osterwalder–Schrader reconstruction},
   volume={35},
   url={http://dx.doi.org/10.1088/1361-6382/aaec56},
   DOI={10.1088/1361-6382/aaec56},
   number={24},
   journal={Classical and Quantum Gravity},
   publisher={IOP Publishing},
   author={Lang, T. and Liegener, K. and Thiemann, T.},
   year={2018},
}

@article{thiemannreview,
      title={Canonical Quantum Gravity, Constructive QFT and Renormalisation}, 
      author={T. Thiemann},
      year={2020},
      eprint={2003.13622},
      archivePrefix={arXiv},
      primaryClass={gr-qc}
}

@article{Kadanoff,
    author = "Kadanoff, L. P.",
    title = "{Scaling laws for Ising models near T(c)}",
    doi = "10.1103/PhysicsPhysiqueFizika.2.263",
    journal = "Physics Physique Fizika",
    volume = "2",
    pages = "263--272",
    year = "1966"
}

@article{Wilsonren,
    author = "Wilson, K. G.",
    title = "{The Renormalization Group: Critical Phenomena and the Kondo Problem}",
    doi = "10.1103/RevModPhys.47.773",
    journal = "Rev. Mod. Phys.",
    volume = "47",
    pages = "773",
    year = "1975"
}

@article{Fisher,
    author = "Fisher, M. E.",
    title = "{The renormalization group in the theory of critical behavior}",
    doi = "10.1103/RevModPhys.46.597",
    journal = "Rev. Mod. Phys.",
    volume = "46",
    pages = "597--616",
    year = "1974",
}

@article{Wegner,
    author = "Wegner, F. J.",
    title = "{Flow equations for Hamiltonians}",
    doi = "10.1016/S0920-5632(00)00911-7",
    journal = "Nucl. Phys. B Proc. Suppl.",
    volume = "90",
    pages = "141--146",
    year = "2000"
}

@article{Glazek,
    author = "Glazek, S. D. and Wilson, K. G.",
    title = "{Renormalization of Hamiltonians}",
    doi = "10.1103/PhysRevD.48.5863",
    journal = "Phys. Rev. D",
    volume = "48",
    pages = "5863--5872",
    year = "1993"
}

@article{renormalizationII,
   title={Hamiltonian renormalisation II. Renormalisation flow of 1+1 dimensional free scalar fields: derivation},
   volume={35},
   url={http://dx.doi.org/10.1088/1361-6382/aaec54},
   DOI={10.1088/1361-6382/aaec54},
   number={24},
   journal={Classical and Quantum Gravity},
   publisher={IOP Publishing},
   author={Lang, T. and Liegener, K. and Thiemann, T.},
   year={2018},
}

@article{renormalizationIII,
   title={Hamiltonian renormalization III. Renormalisation flow of 1 + 1 dimensional free scalar fields: properties},
   volume={35},
   url={http://dx.doi.org/10.1088/1361-6382/aaec3a},
   DOI={10.1088/1361-6382/aaec3a},
   number={24},
   journal={Classical and Quantum Gravity},
   publisher={IOP Publishing},
   author={Lang, T. and Liegener, K. and Thiemann, T.},
   year={2018},
   pages={245013}
}

@article{renormalizationIV,
	doi = {10.1088/1361-6382/aaec43},
	url = {https://doi.org/10.1088%2F1361-6382%2Faaec43},
	year = 2018,
	publisher = {{IOP} Publishing},
	volume = {35},
	number = {24},
	pages = {245014},
	author = {T. Lang and K. Liegener and T. Thiemann},
	title = {Hamiltonian renormalisation {IV}. Renormalisation flow of D{\hspace{0.167em}}{\hspace{0.167em}}$+${\hspace{0.167em}}{\hspace{0.167em}}1 dimensional free scalar fields and rotation invariance},
	journal = {Classical and Quantum Gravity}
}

@misc{renormalizationV,
      title={Hamiltonian Renormalisation V: Free Vector Bosons}, 
      author={K. Liegener and T. Thiemann},
      year={2020},
      eprint={2003.13059},
      archivePrefix={arXiv},
      primaryClass={gr-qc}
}

@article{renormalizationVI,
    author = "Thiemann, T. and Zwicknagel, E. -A.",
    title = "{Hamiltonian renormalization. VI. Parametrized field theory on the cylinder}",
    eprint = "2207.08290",
    archivePrefix = "arXiv",
    primaryClass = "gr-qc",
    doi = "10.1103/PhysRevD.108.125006",
    journal = "Phys. Rev. D",
    volume = "108",
    number = "12",
    pages = "125006",
    year = "2023"
}

@article{renormalisationVII,
    author = "Thiemann, T.",
    title = "{Hamiltonian renormalization. VII. Free fermions and doubler free kernels}",
    eprint = "2207.08291",
    archivePrefix = "arXiv",
    primaryClass = "hep-th",
    doi = "10.1103/PhysRevD.108.125007",
    journal = "Phys. Rev. D",
    volume = "108",
    number = "12",
    pages = "125007",
    year = "2023"
}

@article{renormalisationVIII,
    author = "Rodriguez-Zarate, M. and Thiemann, T.",
    title = "{Hamiltonian renormalisation VIII. P(Phi,2) quantum field theory}",
    eprint = "2505.13030",
    archivePrefix = "arXiv",
    primaryClass = "hep-th",
    month = "5",
    year = "2025"
}

@article{RenormalisationIX,
    author = "Rodriguez-Zarate, M. and Thiemann, T.",
    title = "{Hamiltonian renormalisation IX. U(1)**3 quantum gravity}",
    eprint = "2505.13037",
    archivePrefix = "arXiv",
    primaryClass = "gr-qc",
    month = "5",
    year = "2025"
}

@book{wavelet,
    author = "K. Chui, C.",
    title = "An Introduction to Wavelets",
    publisher = "Academic Press",
    year = "1992",
    doi = "10.2307/2153134"
}

@book{waveletstwo,
    author = "I. Daubechies",
    title = "Ten lectures of wavelets",
    publisher = "Springer Verlag",
    year = "1993"
}

@book{Simonone,
    author = "S., Barry",
    title = {The P {\(\Phi_2\)} Euclidean (Quantum) Field Theory},
    isbn = "978-0-691-61849-4",
    publisher = "Princeton University Press",
    month = "3",
    year = "2015"
}

@article{Smolinuone,
    author = "Smolin, L.",
    title = "{The G(Newton) --$> 0$ limit of Euclidean quantum gravity}",
    eprint = "hep-th/9202076",
    archivePrefix = "arXiv",
    doi = "10.1088/0264-9381/9/4/007",
    journal = "Class. Quant. Grav.",
    volume = "9",
    pages = "883--894",
    year = "1992"
}

@article{thiemanexact,
    author = "Thiemann, T.",
    title = "{Exact quantisation of U(1)$^{3}$ quantum gravity via exponentiation of the hypersurface deformation algebroid}",
    eprint = "2207.08302",
    archivePrefix = "arXiv",
    primaryClass = "gr-qc",
    doi = "10.1088/1361-6382/ad085d",
    journal = "Class. Quant. Grav.",
    volume = "40",
    number = "24",
    pages = "245003",
    year = "2023"
}

@book{Haagbook,
    author = "Haag, R.",
    title = "{Local Quantum Physics}",
    doi = "10.1007/978-3-642-61458-3",
    publisher = "Springer",
    address = "Berlin",
    series = "Theoretical and Mathematical Physics",
    year = "1996"
}

@book{GlimmJaffebook,
    author = "Glimm, J. and Jaffe, A. M.",
    title = "{Quantum Physics. A Functional Integral Point of View}",
    year = "1987"
}

@article{GN,
author = {Gel'fand, I. and Na{\i}mark, M.},
fjournal = {Matematicheski{\i} Sbornik. Novaya Seriya},
 fjournal = {Matematicheski{\u{\i}} Sbornik. Novaya Seriya},
 journal = {Mat. Sb., Nov. Ser.},
 volume = {12},
 pages = {197--213},
 year = {1943},
 language = {English},
 keywords = {46L05},
 zbMATH = {3097323},
 Zbl = {0060.27006}
}

@article{S,
    author = "I. E. Segal" ,
    title = "Irreducible representations of operator algebras",
    journal = "Bull. Amer. Math. Soc.",
    year = "1947"
}

@article{Thiemannkernels,
    author = "Thiemann, T.",
    title = "{Renormalization, wavelets, and the Dirichlet-Shannon kernels}",
    eprint = "2207.08294",
    archivePrefix = "arXiv",
    primaryClass = "hep-th",
    doi = "10.1103/PhysRevD.108.125008",
    journal = "Phys. Rev. D",
    volume = "108",
    number = "12",
    pages = "125008",
    year = "2023"
}

@article{haar,
    author = "A. Haar",
    title = "Zur Theorie der orthogonalen Funktionensysteme",
    journal = "Mathematische Annalen",
    year = "1910",
    volume = "69"
}

@article{dirichlet,
    author = "Dirichlet, P. G. L.",
    title = "Sur la convergence des séries trigonométriques qui servent à représenter une fonction arbitraire entre des limites données",
    journal = "J. Reine Angew. Math.",
    year = "1829",
    pages = "157-169",
    volume = "4"
}

@article{NTreps,
    author = "Thirring, W. E. and Narnhofer, H.",
    title = "{Covariant QED without indefinite metric}",
    doi = "10.1142/S0129055X92000200",
    journal = "Rev. Math. Phys.",
    volume = "4",
    number = "spec01",
    pages = "197--211",
    year = "1992"
}

@article{Thiemannfock,
    author = "Thiemann, Thomas",
    title = "{Nonperturbative quantum gravity in Fock representations}",
    eprint = "2405.01212",
    archivePrefix = "arXiv",
    primaryClass = "gr-qc",
    doi = "10.1103/PhysRevD.110.124023",
    journal = "Phys. Rev. D",
    volume = "110",
    number = "12",
    pages = "124023",
    year = "2024"
}
\end{document}